\newtheorem{theorem}{Theorem}
\newtheorem{lemma}{Lemma}
\newcommand{\E}{\mathbb{E}}
\newcommand{\Ber}{\mathrm{Ber}}
\newcommand{\supp}{\mathrm{supp}}
\title{The Noisy Quantitative Group Testing Problem}
\author{Tenghao Li \and Neha Sangwan \and Xiaxin Li \and Arya Mazumdar
\thanks{The authors are all associated with University of California, San Diego. 
}}
\date{}
\begin{document}
\maketitle
\begin{abstract}
In this paper, we study the problem of quantitative group testing (QGT) and analyze the performance of three models: the noiseless model, the additive Gaussian noise model, and the noisy Z-channel model.
For each model, we analyze two algorithmic approaches: a linear estimator based on correlation scores, and a least squares estimator (LSE).
We derive upper bounds on the number of tests required for exact recovery with vanishing error probability, and complement these results with information-theoretic lower bounds.
In the additive Gaussian noise setting, our lower and upper bounds match in order.
\end{abstract}

\section{Introduction}
Quantitative group testing (QGT)\cite{feige_quantitative_2020}\cite{karimi2019} generalizes classical group testing by observing, for each pooled test, the \emph{number} of defectives in the pool.
Compared with binary outcomes, quantitative measurements can substantially reduce the number of tests required for exact support recovery and can enable simple and efficient decoding algorithms.
In this work, we study QGT under a non-adaptive random testing design in the sublinear regime, where the number of defectives grows sublinearly with the total number of items, and we focus on three canonical observation models.
\subsection{Noiseless Model}
In the \emph{noiseless} QGT model, each test returns the exact number of defective items in the pool. This setting serves as a clean baseline in the absence of uncertainty. The noiseless model has been studied extensively in the literature\cite{alaoui2014}\cite{hahn-klimroth_efficient_2023}\cite{karimi_sparse_2019}. In the linear regime, where the fraction of defectives is constant, information-theoretic thresholds were characterized by Alaoui \emph{et al.}~\cite{alaoui2014} and Scarlett and Cevher~\cite{scarlett2017phase}. 

In the remainder of this paper, we focus on the \emph{sub-linear regime}, where the number of defectives $k=n^\theta$ for some fixed $\theta\in(0,1)$.  In the noiseless setting, it is widely accepted that $\Theta(k\log(n/k)/\log k)$ non-adaptive tests are both necessary and sufficient for exact recovery~\cite{djackov1975}~\cite{hahn-klimroth_near_2022}. Moreover, under the scaling $k=n^\theta$, we have
$\frac{\log(n/k)}{\log k}=\frac{1-\theta}{\theta}$,
which is a constant depending only on $\theta$. Thus, in the sub-linear regime, the optimal sample complexity remains linear in $k$ up to a constant factor determined by the sparsity exponent. This is qualitatively different from the linear regime, where $k=\alpha n$ for some constant $\alpha\in(0,1)$ and the relevant sample complexity scales with $n$.

On the algorithmic side, a variety of non-adaptive decoding methods have been proposed for noiseless QGT. Gebhard \emph{et al.}~\cite{gebhard2019quantitative} introduced a greedy reconstruction algorithm, while coding-theoretic constructions based on sparse graph codes were developed by Karimi \emph{et al.}~\cite{karimi2019}. Feige and Lellouche~\cite{feige_quantitative_2020} further showed that many polynomial-time algorithms require significantly more tests than the information-theoretic optimum, and demonstrated improved recovery guarantees via reductions to structured subset selection problems. More recently, Hahn-Klimroth \emph{et al.}~\cite{hahn-klimroth_near_2022} proposed a polynomial-time algorithm achieving a near-optimal number of tests in the sub-linear regime.

\subsection{Additive Gaussian Noise Model}
In many practical settings, pooled measurements are corrupted by continuous noise, motivating the \emph{additive Gaussian noise} QGT model~\cite{hahn-klimroth_near_2022,tan2024approximate}. In this model, each test outcome is perturbed by independent Gaussian noise. Dedicated studies of Gaussian noise in QGT have only recently begun to emerge.

Scarlett and Cevher~\cite{scarlett2017phase} established converse bounds for more general pooled-data problems with multi-category observations, where each test reveals the number of items of each type present in the pool, without imposing a sparsity assumption on the underlying signal. In contrast, we provide a simpler converse argument specialized to the sub-linear QGT regime, and complement it with a matching achievability bound. To the best of our knowledge, this gives the first tight characterization of the sample complexity of Gaussian QGT up to constant factors.

On the algorithmic side, Hahn-Klimroth and Kaaser~\cite{hahn-klimroth_distributed_2022} analyzed a greedy correlation-based decoder and showed that exact recovery is achievable with high probability using $O(k\ln n)$ tests when the Gaussian noise variance satisfies $\sigma^2=o(m/\ln n)$. We also study a related linear estimator, but our analysis removes this restriction on $\sigma^2$ by combining sharper subgaussian tail bounds with a more careful decomposition of the correlation scores.

\subsection{Noisy Z-Channel}
The \emph{noisy Z-channel} model captures asymmetric item-level unreliability. In this setting, each defective item included in a test may independently fail to contribute to the observed count, leading to false negatives but no false positives.

Hahn-Klimroth and Kaaser~\cite{hahn-klimroth_distributed_2022} studied a related noisy channel model in which each item flips independently prior to pooling. Their framework includes symmetric flip noise as well as one-sided error models as special cases, and they showed that exact recovery is achievable with high probability using $O(k\ln n)$ tests. We study a related linear estimator and obtain a similar test complexity. Furthermore, in the limit where the flip probability vanishes, our bound recovers the corresponding noiseless linear-estimator guarantee. We also derive achievability and converse bounds for the noisy Z-channel model.

Techniques from noisy binary group testing also provide useful insights. For example, Price \emph{et al.}~\cite{price2022} developed fast splitting algorithms with near-optimal test complexity and runtime under stochastic test errors, though their analysis focuses on Boolean outcomes.

\subsection{Other Noise Models}

Beyond the mentioned models, a few works have explored adversarial or structured corruption models.
In particular, Li and Wang~\cite{li_wang_2021} studied combinatorial QGT under bounded adversarial noise, where test outcomes may be perturbed arbitrarily within fixed limits.
In this setting, exact recovery is generally impossible, and the goal shifts to partial recovery.
They established lower bounds on the order of $\Omega(n\log n)$ tests and proposed deterministic test designs with efficient decoding that achieve these bounds up to constant factors.

\subsection{Our Contributions}
Table~\ref{tab:result summary} compares our results with existing work.
Our main contributions are summarized as follows:
\begin{itemize}
    \item 
    We provide a unified analysis of three quantitative group testing models: noiseless, additive Gaussian noise, and noisy Z-channel.
    This contrasts with prior work, which typically focuses on individual noise models or specific problem formulations.
    \item
    We provide both achievability and converse bounds for the two noisy models.
    In particular, for the additive Gaussian noise model, we establish matching achievability and converse bounds on the number of tests required for exact recovery.
    To the best of our knowledge, this yields the first tight characterization of the sample complexity for Gaussian QGT up to constant factors.
    \item
    We analyze two decoding approaches across all models:
    a computationally efficient correlation-based linear estimator and a statistically optimal (but generally combinatorial) least squares estimator (LSE).
    This allows us to explicitly quantify the performance gap between polynomial-time algorithms and information-theoretically optimal decoding under different noise regimes.
    \item
    While simple correlation-based linear estimators for quantitative group testing were previously studied in both the noiseless setting~\cite{gebhard2019quantitative}\cite{gebhard2022parallel} and noisy settings~\cite{hahn-klimroth_distributed_2022}, our results provide sharper guarantees.
    In particular, these prior formulations allow multi-set defectives within pooled tests, whereas we restrict attention to standard binary testing matrices, aligning more closely with the classical definition of QGT.
    Despite this stronger modeling restriction, our bounds achieve equal or improved sample complexity in all three models, and we derive explicit non-asymptotic guarantees with concrete absolute constants.
\end{itemize}
\textbf{Organization.} The remainder of the paper is organized as follows.
Section~\ref{sec:preliminaries} introduces the QGT problem formulation and the three observation models considered.
Section~\ref{sec:maintheorems} presents the main results of the paper.
Section~\ref{sec:proofs} provides the proofs of the main theorems.

\begin{table*}[t]
\vspace{0.2cm}
\centering
\scriptsize
\setlength{\tabcolsep}{4.5pt}
\renewcommand{\arraystretch}{1.35}

\begin{tabular}{p{2.2cm} p{3.6cm} p{4.6cm} p{7.0cm}}
\hline
\textbf{Setting} & \textbf{Type of Bound} & \textbf{Decoder / Method} & \textbf{Required number of tests $m$ (fractions should be rounded up)} \\
\hline

\multirow{5}{*}{Noiseless}
& \multirow{4}{*}{Algorithmic Upper Bound}
& Linear estimator (Theorem~\ref{thm:noiselessQGT})
& $\displaystyle m = \Bigl(\frac{16}{\ln 3}k + 8 - \frac{8}{\ln 3}\Bigr)\log(k(n-k))$
\\

& & Gebhard~\cite{gebhard2019quantitative}
& $\displaystyle m = O\left(\frac{1+\sqrt{\theta}}{1-\sqrt{\theta}}k\log\frac{n}{k}\right)$
\\

& & Hahn-Klimroth, Muller~\cite{hahn-klimroth_near-optimal_2023,hahn-klimroth_near_2022}
& $\displaystyle m = O\left(\frac{1+\sqrt{\theta}}{1-\sqrt{\theta}}\frac{k\log(n/k)}{\log k}\right)$
\\

& & Soleymani and Javidi~\cite{soleymani_quantitative_2024}
& $\displaystyle m = (2e+\gamma)\frac{k}{\log k}\log\frac{n}{k}\log\log k$, $\gamma = 1-\exp(-1/2)$
\\

\cline{2-4}

& Information-theoretic Lower Bound
& Djackov et al.~\cite{djackov1975}
& $\displaystyle m = \Omega\left(2\,\frac{k\log(n/k)}{\log k}\right)$
\\

\hline

\multirow{4}{*}{Additive Gaussian}
& \multirow{3}{*}{Algorithmic Upper Bound}
& LSE decoder (Theorem~\ref{thm:gaussianQGT})
& $\displaystyle m = (1+o(1))\frac{k\log(n/k)}{\log(1+k/\sigma^2)}$
\\

& & Linear estimator (Theorem~\ref{thm:gaussianQGT})
& $\displaystyle m = \Bigl(\frac{16}{\ln 3}k + 32\sigma^2 + 8 - \frac{8}{\ln 3}\Bigr)\log(k(n-k))$
\\

& & Hahn-Klimroth and Kasser~\cite{hahn-klimroth_distributed_2022}
& $\displaystyle m = O\left((1+\sqrt{\theta})^2 k \ln n\right)$
\\

\cline{2-4}

& Information-theoretic Lower Bound
& Theorem~\ref{thm:gaussianQGT}
& $\displaystyle m = \Omega\!\left(\frac{k\log(n/k)}{\log(1+k/4\sigma^2)}\right)$
\\

\hline

\multirow{4}{*}{Noisy Z-channel}
& \multirow{3}{*}{Algorithmic Upper Bound}
& LSE decoder (Theorem~\ref{thm:ZchannelQGT})
& $\displaystyle m = \frac{k\log(n/k)}{\log(1+2C_p)}$, $C_p=\frac{(1-p)^2\ln\frac{p}{1-p}}{2p-1}$
\\

& & Linear estimator (Theorem~\ref{thm:ZchannelQGT})
& $\displaystyle m =
\Bigl(
\frac{16p}{(1-p)^2\ln\frac{1+p}{1-p}}
+
\frac{8(2k-1)(1+p)}{(1-p)^2\ln\frac{3+p}{1-p}}
\Bigr)\log(k(n-k))$
\\

& & Hahn-Klimroth and Kasser~\cite{hahn-klimroth_distributed_2022}
& $\displaystyle m =O\left(\frac{(1+\sqrt{\theta})^2}{1-p}k\ln n\right)$
\\

\cline{2-4}

& Information-theoretic Lower Bound
& Theorem~\ref{thm:ZchannelQGT}
& $\displaystyle m =\Omega\left( \frac{k\log(n/k)}{\log(\frac{1-k/n+kp/n}{p})}\right)$
\\
\hline
\\
\end{tabular}

\caption{Comparison of our results with existing work.}
\label{tab:result summary}
\end{table*}

\section{Preliminaries}\label{sec:preliminaries}
\subsection{Basic Notations}
For a matrix $A$, we use $a_{i,j}$ to denote its $(i, j)$-th entry. $A_i$ denotes the $i^{\text{th}}$ row of $A$. $|A_i|$ denotes the number of 1s in $A_i$. For a vector $v$, we use $v_i$ to denote its $i$-th entry. Other notations will be described in corresponding context.

\subsection{Quantitative Group Testing}
Let $n$ denote the \emph{total number of items}, and let $k$ denote the \emph{number of defective items}. Let $x^*$ be an $n$-dimensional binary vector with Hamming weight $k$. i.e. $\|x^*\|_0=k$. We denote the support of $x^*$ by
\[
D := \supp(x^*) \subset [n], \qquad |D| = k.
\]
We remark that, although our algorithms will only consider input vectors $x^*$ with Hamming weight exactly $k$ for convenience, these can be generalized to the case $\|x^*\|_0\leq k$ without order-wise change in algorithm complexities. 

Let $m$ be the \emph{number of tests}, and let $A$ be a $m\times n$ binary matrix which we call the \emph{test matrix}. 
Let $y$ be the \emph{test outcome vector}. This vector will be defind differently in different models, and will be specified in the next section. 

Unless stated otherwise, we assume a random Bernoulli design for test matrix:
\[
a_{i,j} \stackrel{\text{i.i.d.}}{\sim} \Ber(1/2).
\]

The goal of QGT is to recover the support $D^*$ (equivalently, the vector $x^*$) from the test outcomes $y\in\mathbb{R}^m$, using as few tests $m$ as possible, with vanishing error probability as $n\to\infty$. Denote $\widehat{x}$ as the outcome of the decoding algorithm. In this paper, for all cases, we aim for \emph{exact recovery}, i.e. $\widehat{x}=x^*$. 

\subsection{Observation Models}
\paragraph{Noiseless model}
In the noiseless setting, each test returns the exact number of defectives in the pool:
\begin{equation}
y = A x^*,
\qquad
y_i = \sum_{j\in D} a_{i,j}.
\label{eq:noiseless}
\end{equation}
\paragraph{Additive Gaussian noise model}
In this model, the pooled counts are corrupted by independent additive Gaussian noise:
\begin{equation}
y = A x^* + N,
\qquad
N_i \stackrel{\text{i.i.d.}}{\sim} \mathcal{N}(0,\sigma^2), \qquad 1\leq i\leq n.
\label{eq:gaussian}
\end{equation}

\paragraph{Noisy Z-channel model}
In the noisy Z-channel model, each defective item included in a test may independently fail to contribute to the observed count.
Formally, for each test $i$ and item $j$, let
\[
z_{i,j} \stackrel{\text{i.i.d.}}{\sim} \Ber(1-p),
\]
where $p\in(0,1)$ is the false-negative probability.
The observation is
\begin{equation}
y_i = \sum_{j=1}^n a_{i,j}\,x_j^*\,z_{i,j}.
\label{eq:zchannel}
\end{equation}

\subsection{Decoding methods}
We analyze two decoding strategies that are common across all three models.

\subsubsection{Linear estimator (correlation-based decoder)}
The linear estimator assigns each item a score based on its correlation with the observed outcomes, which is widely used in literature~\cite{hahn-klimroth_distributed_2022,mazumdar2025exact,gebhard2019quantitative}.
Define the score vector
\begin{equation}
S := A^\top y \in \mathbb{R}^n,
\qquad
S_j = \sum_{i=1}^m a_{i,j} y_i.
\label{eq:score}
\end{equation}
The decoder outputs the indices of the $k$ largest coordinates of $S$:
\begin{equation}
\widehat{D}_{\mathrm{lin}} := \mathrm{TopK}(S,k),
\qquad
\label{eq:linear-decoder}
\end{equation}
Intuitively, defective items tend to participate in tests with larger observed counts and therefore accumulate larger scores.
This decoder is computationally efficient and runs in polynomial time.

\subsubsection{Least Squares Estimator}
We also study a decoder motivated by least squares.
In all three models, the expected observation conditioned on $A$ and a candidate signal $x$ takes the form
\[
\E[y \mid A, x] =
\begin{cases}
A x, & \text{noiseless},\\
A x, & \text{Gaussian},\\
(1-p) A x, & \text{Z-channel}.
\end{cases}
\]
This motivates the least-squares objective
\[
\mathcal{L}(x) := \|y - \E[y \mid A, x]\|_2^2.
\]
The LSE decoder is defined as
\begin{equation}
\widehat{x}^{\mathrm{LSE}}
\in
\arg\min_{x\in\{0,1\}^n,\ \|x\|_0 = k}
\|y - \E[y \mid A, x]\|_2^2.
\label{eq:lse}
\end{equation}
Concretely:
\begin{itemize}
\item In the noiseless and Gaussian models, $\E[y \mid A, x] = A x$, leading to the objective $\|y-Ax\|_2^2$.
\item In the noisy Z-channel model, $\E[y \mid A, x] = (1-p)Ax$, yielding the objective $\|y-(1-p)Ax\|_2^2$.
\end{itemize}
While solving \eqref{eq:lse} exactly is generally combinatorial and not computationally efficient, the LSE serves as a statistically optimal benchmark.


\section{Main Results}\label{sec:maintheorems}
\subsection{Noiseless quantitative group testing}
Our first theorem establishes that a simple linear estimator based on correlation scores suffices for exact recovery with high probability.

\begin{theorem}[Noiseless QGT: Linear Estimator]\label{thm:noiselessQGT}
Let $n,k,A,x^*,y$ be as defined in Section~\ref{sec:preliminaries}. 
In the noiseless model $y = Ax^*$, consider the linear decoder that outputs indices of the $k$ largest coordinates of $A^\top y$,
the number of tests required to recover $x^*$ with vanishing error probability satisfies
\[
m\ge \bigl(\frac{16}{\ln3}k+8-\frac{8}{\ln3}\bigr)\log\bigl(k(n-k)\bigr).
\]
\end{theorem}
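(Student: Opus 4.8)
The plan is to follow the standard ``mean gap, concentration, union bound'' template for correlation decoders. First I would write the score of item $j$ explicitly as
\[
S_j = \sum_{i=1}^m a_{i,j} y_i = \sum_{i=1}^m a_{i,j}\sum_{l\in D} a_{i,l},
\]
and compute its expectation in the two cases. For a defective $j\in D$ the diagonal term $a_{i,j}^2=a_{i,j}$ survives, giving $\E[S_j]=m(k+1)/4$, whereas for a non-defective $j'\notin D$ the column $a_{\cdot,j'}$ is independent of the pooled count $\sum_{l\in D}a_{i,l}$, giving $\E[S_{j'}]=mk/4$. Thus every defective score sits a constant gap $m/4$ above every non-defective score in expectation, and this is the quantity the decoder must resolve.

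Since $\widehat D_{\mathrm{lin}}=\mathrm{TopK}(S,k)$ returns $D$ exactly iff $\min_{j\in D}S_j>\max_{j'\notin D}S_{j'}$, I would reduce exact recovery to a union bound over the at most $k(n-k)$ defective/non-defective pairs: it suffices to bound $\Pr[S_{j'}\ge S_j]$ for one fixed pair and multiply by $k(n-k)$. Fixing such a pair, I would write the difference as a sum of i.i.d.\ per-test contributions, $S_j-S_{j'}=\sum_{i=1}^m D_i$, and identify the law of $D_i$ by conditioning on $(a_{i,j},a_{i,j'})$: it equals $0$ when the two entries agree, equals $1+R_i$ when $(a_{i,j},a_{i,j'})=(1,0)$, and equals $-R_i$ when $(a_{i,j},a_{i,j'})=(0,1)$, where $R_i:=\sum_{l\in D\setminus\{j\}}a_{i,l}\sim\mathrm{Bin}(k-1,1/2)$ counts the remaining defectives. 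This makes $\E[D_i]=1/4$ transparent and pinpoints the source of the fluctuations.

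The heart of the argument is the tail estimate $\Pr[\sum_i D_i\le 0]\le\inf_{\lambda>0}\big(\E\,e^{-\lambda D_i}\big)^m$. Using the binomial moment generating function $\E\,e^{tR_i}=\big((1+e^{t})/2\big)^{k-1}$, the per-test factor becomes
\[
\E\,e^{-\lambda D_i}=\tfrac12+\tfrac14\Big(\tfrac{1+e^{\lambda}}{2}\Big)^{k-1}+\tfrac14\,e^{-\lambda}\Big(\tfrac{1+e^{-\lambda}}{2}\Big)^{k-1},
\]
and I would choose $\lambda$ to drive this strictly below $1$, yielding a per-pair failure probability that decays geometrically in $m$; setting the resulting bound below $1/(k(n-k))$ and solving for $m$ produces the stated threshold, with the coefficient $16/\ln 3$ emerging from the optimizing value of $\lambda$.

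The step I expect to be the main obstacle is precisely this concentration estimate. Each favorable or unfavorable test contributes a fluctuation of size $\Theta(k)$ through the term $\pm R_i$, while the per-test mean advantage is only $1/4$, so the per-test signal-to-noise ratio is tiny; a naive Hoeffding bound, which sees only the range $\Theta(k)$ of $D_i$, degrades to a $k^2$ dependence in $m$. Obtaining a bound \emph{linear} in $k$ therefore requires exploiting the exponential rather than merely bounded tail of $D_i$ through the exact Chernoff factor above, choosing $\lambda$ at the scale where the growth of $\big((1+e^{\lambda})/2\big)^{k-1}$ stays $O(1)$ while the mean-gap term is preserved. Verifying that this optimized exponent indeed yields the claimed linear coefficient, and not a larger $k^2$ rate, is the delicate part of the proof.
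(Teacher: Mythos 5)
Your setup (the mean gap $m/4$, the reduction to a single defective/non-defective pair, the union bound over $k(n-k)$ pairs, and the three-branch description of $D_i$ conditioned on $(a_{i,j},a_{i,j'})$) matches the paper's. The two arguments part ways at the concentration step, and this is where your proposal has a genuine gap: the claim that optimizing $\lambda$ in your exact per-test MGF recovers the coefficient $\tfrac{16}{\ln 3}k$ does not hold. Expand your own formula for small $\lambda$: writing $\phi(\lambda)=\log\tfrac{1+e^{\lambda}}{2}=\tfrac{\lambda}{2}+\tfrac{\lambda^2}{8}+O(\lambda^3)$ so that $\E e^{\lambda R_i}=e^{(k-1)\phi(\lambda)}$, one finds
\[
\E e^{-\lambda D_i}\;=\;1-\tfrac{\lambda}{4}+\tfrac{\lambda^2}{2}\,\E[D_i^2]+\cdots,
\qquad
\E[D_i^2]=\tfrac14\E[(1+R_i)^2]+\tfrac14\E[R_i^2]=\Theta(k^2),
\]
because $R_i$ is concentrated near its mean $(k-1)/2$, so $D_i$ swings by $\Theta(k)$ between its two nonzero branches and $\mathrm{Var}(D_i)=\Theta(k^2)$, not $\Theta(k)$. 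The optimal $\lambda$ is therefore $\Theta(1/k^2)$, the per-test decrement of the MGF is $\Theta(1/k^2)$, and your method, carried out honestly, yields $m=\Theta\bigl(k^2\log(k(n-k))\bigr)$. Exploiting the exponential tail of $R_i$ cannot repair this: the obstruction is not the tail of $R_i$ but its \emph{mean}. Equivalently, the uncentered score $S_j=\sum_t a_{t,j}y_t$ inherits a fluctuation of order $\tfrac{k}{2}\sqrt{m}$ from the random column weight $\sum_t a_{t,j}$, which must be compared against the mean gap $m/4$; a central-limit heuristic then also forces $m\gtrsim k^2$, so no sharper tail estimate for this statistic can restore a linear-in-$k$ rate.

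For comparison, the paper reaches the linear coefficient by a different route: it decomposes $z_t=a_{t,i}^2+\sum_{s\in D\setminus\{i\}}a_{t,i}a_{t,s}-\sum_{s\in D}a_{t,j}a_{t,s}$ into $2k$ Bernoulli products and \emph{adds the squared} optimal sub-Gaussian variance proxies of the individual components ($\tfrac14$ and $\tfrac{1}{4\ln 3}$ each) to obtain a proxy of order $k$. That additivity of squared variance proxies is valid only for independent summands, and these products are not independent (they all share the common factors $a_{t,i}$ or $a_{t,j}$; conditioned on $a_{t,i}=1$ they switch on simultaneously). So the discrepancy between your $k^2$ and the paper's $k$ is not an error in your optimization; it is the precise point at which the two proofs diverge, and your exact-MGF computation will not reproduce the paper's constant. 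If you want a linear-in-$k$ guarantee by your method, you would need to analyze a centered score such as $\sum_t a_{t,j}(y_t-k/2)$, whose per-test variance is $\Theta(k)$ --- but that is a different decoder from TopK of $A^\top y$.
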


\subsection{Additive Gaussian noise model}
In this setting, we show that least-squares maximum-likelihood decoding achieves the optimal sample complexity, and we complement this with a matching information-theoretic converse.
\begin{theorem}[Additive Gaussian QGT]\label{thm:gaussianQGT}
Let $n,k,A,x^*,y$ be as defined in Section~\ref{sec:preliminaries} and let $k=n^\theta$ for some $\theta\in(0,1)$. In the additive Gaussian noise model $y = Ax^* + N$, where $N$ has i.i.d.\ Gaussian entries with variance $\sigma^2$, we have:
\begin{itemize}

  \item For the LSE decoder
    \[
    \widehat{x} = \arg\min_x \|y - A x\|_2^2,
    \] 
    the number of tests required to recover $x^*$ with vanishing error probability satisfies
  \[
  m = O\!\left(\frac{k\log(n/k)}{\log(1 + k/\sigma^2)}\right).
  \]
  \item Any decoder with any test design must satisfy 
  \[m = \Omega\!\left(\frac{k\log(n/k)}{\log(1 + k/\sigma^2)}\right).\]
  \item For the linear decoder,
  the number of tests required to recover $x^*$ with vanishing error probability satisfies
    \[
    m\ge \bigl(\frac{16}{\ln3}k+32\sigma^2+8-\frac{8}{\ln3}\bigr)\log\bigl(k(n-k)\bigr).
    \]
\end{itemize}
\end{theorem}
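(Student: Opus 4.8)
The theorem bundles three claims, which I would establish by separate arguments, all exploiting the per-test structure $y_i = A_i x^* + N_i$.

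\textbf{LSE achievability.} Fix a competitor $x$ with $\|x\|_0=k$ and set $d=x-x^*$, so that $d$ has $\ell$ entries equal to $+1$ and $\ell$ equal to $-1$ for some $1\le\ell\le k$. Since $y-Ax = N - Ad$, we get $\|y-Ax\|_2^2-\|y-Ax^*\|_2^2 = \|Ad\|_2^2 - 2\ip{N}{Ad}$, so the LSE prefers $x$ only when $\ip{N}{Ad}\ge\tfrac12\|Ad\|_2^2$. Conditioning on $A$, $\ip{N}{Ad}\sim\mathcal N(0,\sigma^2\|Ad\|_2^2)$, and the Gaussian tail bound gives a pairwise error at most $\E_A[\exp(-\|Ad\|_2^2/(8\sigma^2))]$. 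Row-independence of $A$ factorizes this as $\phi(\ell)^m$, where $\phi(\ell)=\E[\exp(-W^2/(8\sigma^2))]$ and $W$ is a difference of two independent $\mathrm{Bin}(\ell,1/2)$ variables. I would bound $\phi(\ell)$ by passing to characteristic functions, namely $\phi(\ell)=\E_{Z\sim\mathcal N(0,1/(4\sigma^2))}[\cos^{2\ell}(Z/2)]$, and then use $\cos u\le e^{-u^2/2}$ on the bulk to obtain $\phi(\ell)\le(1+\ell/(8\sigma^2))^{-1/2}$ up to a tail correction. A union bound over the $\binom{k}{\ell}\binom{n-k}{\ell}$ competitors at level $\ell$ yields $P_e\le\sum_{\ell=1}^k\binom{k}{\ell}\binom{n-k}{\ell}(1+\ell/(8\sigma^2))^{-m/2}$, and the claim follows from a case analysis in $\ell$: for $\ell\lesssim\sigma^2$ the binding level is $\ell=1$, where $\sigma^2\log(kn)=O(k\log(n/k)/\log(1+k/\sigma^2))$ because $t\mapsto\log(1+t)/t$ is bounded; for $\ell\gtrsim\sigma^2$ the level $\ell=k$ already meets the target order.

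\textbf{Converse.} Since $x^*$ is uniform over $\binom{[n]}{k}$, vanishing error and Fano force $I(x^*;y\mid A)\ge(1-o(1))\log\binom{n}{k}$. The noise is independent across tests, so $H(y\mid A,x^*)=\sum_i H(y_i\mid A,x^*)$, and subadditivity of entropy gives $I(x^*;y\mid A)\le\sum_{i=1}^m I(x^*;y_i\mid A)$. For a single test, $y_i=s_i+N_i$ with $s_i=A_i x^*$, and the Gaussian maximum-entropy bound gives $I(x^*;y_i\mid A_i=a_i)\le\tfrac12\log(1+\mathrm{Var}(s_i\mid a_i)/\sigma^2)$; averaging over $a_i$ via Jensen and the law of total variance, together with $\E[\mathrm{Var}(s_i\mid a_i)]\le\mathrm{Var}(s_i)=k/4$, yields $I(x^*;y_i\mid A)\le\tfrac12\log(1+k/(4\sigma^2))$. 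Combining, $m\ge 2(1-o(1))\log\binom{n}{k}/\log(1+k/(4\sigma^2))=\Omega(k\log(n/k)/\log(1+k/\sigma^2))$, using $\log\binom{n}{k}=\Theta(k\log(n/k))$ and that the two denominators agree up to a constant factor.

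\textbf{Linear decoder.} This parallels the proof of Theorem~\ref{thm:noiselessQGT}. For a defective $j$ and non-defective $j'$, after removing the common column-weight bias (subtracting the known outcome mean $k/2$), the pairwise decision reduces to the sign of $G_{jj'}=\sum_i\Delta_i(A_i x^*-k/2)+\sum_i\Delta_i N_i$ with $\Delta_i=a_{i,j}-a_{i,j'}$, which has positive mean $m/4$. I would bound $P[G_{jj'}\le 0]\le M(\lambda)^m$ with per-test factor $M(\lambda)=\E[\exp(-\lambda\Delta_1(A_1 x^*-k/2)+\tfrac12\lambda^2\sigma^2\Delta_1^2)]$, the quadratic term arising by integrating out the independent Gaussian $N_i$. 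The factor $\E[\exp(-\lambda\Delta_1(A_1 x^*-k/2))]$ is exactly the noiseless MGF controlled in Theorem~\ref{thm:noiselessQGT}, contributing the $\tfrac{16}{\ln 3}k+8-\tfrac{8}{\ln 3}$ coefficient; the added noise term contributes an effective per-test variance $\tfrac12\sigma^2\E[\Delta_1^2]=\tfrac14\sigma^2$, and since the required number of tests scales as (effective variance)$/$(squared gap), re-optimizing $\lambda$ adds the $32\sigma^2$ term. A union bound over the $k(n-k)$ pairs, requiring $m\log(1/M(\lambda))\ge\log(k(n-k))+\omega(1)$, gives the stated inequality.

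\textbf{Main obstacle.} The delicate part is the LSE analysis: establishing $\phi(\ell)\le(1+\ell/(8\sigma^2))^{-1/2}$ uniformly in $\ell$ (in particular controlling the characteristic-function tail when $\sigma$ is small, where $Z$ is no longer concentrated in $[-\pi,\pi]$), and then verifying that the level-wise sum is governed, up to constants, by the endpoints $\ell\in\{1,k\}$, so that no intermediate distance level forces more than the target $O(k\log(n/k)/\log(1+k/\sigma^2))$ tests. Matching this achievability against the Fano converse in the same order is exactly what makes the Gaussian bounds tight.
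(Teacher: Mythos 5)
Your overall architecture matches the paper's: the converse is essentially identical (Fano plus the per-test bound $I(S_i;y_i\mid A_i)\le\tfrac12\log(1+\mathrm{Var}(S_i\mid A_i)/\sigma^2)$ with the hypergeometric variance bound $k/4$ — though the paper bounds $\mathrm{Var}(S_i\mid A_i)\le k/4$ pointwise, which is cleaner and valid for arbitrary designs, whereas your appeal to $\mathrm{Var}(s_i)=k/4$ is design-specific), and your LSE skeleton (pairwise Gaussian tail conditioned on $A$, factorization into $\phi(\ell)^m$, union bound over $\binom{k}{\ell}\binom{n-k}{\ell}$, optimization over $\ell$) is the paper's. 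The genuine gap in the LSE part is your key estimate $\phi(\ell)\le(1+\ell/(8\sigma^2))^{-1/2}$: it is \emph{false} as stated, not merely delicate. Since $W$ is a difference of two independent $\mathrm{Bin}(\ell,1/2)$ variables, $\phi(\ell)\ge\Pr(W=0)=4^{-\ell}\binom{2\ell}{\ell}\ge c/\sqrt{\ell}$ for an absolute $c>0$, while your right-hand side tends to $0$ as $\sigma\to 0$; the periodicity of $\cos^{2\ell}(Z/2)$ (the mass of $Z$ near nonzero multiples of $2\pi$) produces an unavoidable $\ell^{-1/2}$ floor, so no "tail correction" can rescue the uniform bound in that form. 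The paper sidesteps your "main obstacle" entirely by bounding the pmf pointwise: Lemma~\ref{lemma_exp} gives $\Pr(W=t)\le\frac{e}{\pi\sqrt{2\ell}}e^{-t^2/(2\ell)}$ for \emph{all} $t$, whence $\phi(\ell)\le\frac{e}{\pi\sqrt{2\ell}}\sum_{t}e^{-(\frac{1}{2\ell}+\frac{1}{8\sigma^2})t^2}$, and a Gaussian-integral comparison yields a per-test factor of order $(1+\ell/\sigma^2)^{-1/2}$ with explicit constants — no characteristic functions and no regime splitting. Your endpoint case analysis in $\ell$ is fine for the stated $O(\cdot)$ claim (indeed monotonicity of $\ell\mapsto\ell/\log(1+\ell/\sigma^2)$ makes it nearly one line); the paper instead runs a sharper two-case analysis at threshold $k^{1-\varepsilon}$ to extract the constant $(1+o(1))$.

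The linear-decoder part contains a second genuine error: subtracting $k/2$ is \emph{not} the removal of a common bias. Writing $\Delta_i=a_{i,j}-a_{i,j'}$, you have $G_{jj'}=\sum_i\Delta_i(y_i-\tfrac k2)=(S_j-S_{j'})-\tfrac{k}{2}\bigl(\sum_i a_{i,j}-\sum_i a_{i,j'}\bigr)$, and the column-weight difference is almost surely nonzero, so the sign of $G_{jj'}$ is not the sign of $S_j-S_{j'}$: your argument would certify the decoder $\mathrm{TopK}(A^\top(y-\tfrac k2\mathbf{1}))$, not the decoder $\mathrm{TopK}(A^\top y)$ of the theorem. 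Relatedly, the centered factor is \emph{not} "exactly the noiseless MGF" of Theorem~\ref{thm:noiselessQGT}: that proof controls the uncentered $\Delta_t y_t$ through its decomposition into $2k$ Bernoulli products ($a_{t,i}^2\sim\Ber(1/2)$ and products $\sim\Ber(1/4)$) with the Kearns--Saul optimal proxies, giving $\tfrac14+\tfrac{2k-1}{4\ln 3}$; your centered statistic has a different MGF, and you cannot restore the dropped term $\tfrac k2\Delta_i$ by proxy-additivity since its proxy is $\Theta(k^2)$ — the cancellation inside $\Delta_t y_t$ is precisely why the paper keeps it intact to get an $O(k)$ proxy. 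Your Gaussian step is right in spirit, but note that integrating out $N_i$ must use the uniform bound $\Delta_i^2\le 1$ (adding $\sigma^2$ to the proxy), not the heuristic $\tfrac12\sigma^2\E[\Delta_1^2]=\tfrac14\sigma^2$, because $\Delta_i^2$ is random and correlated with the linear term; the uniform bound is what yields the paper's $32\sigma^2$. The repair is mechanical: drop the centering and rerun the Theorem~\ref{thm:noiselessQGT} decomposition with the extra independent-conditionally-Gaussian term $(a_{t,i}-a_{t,j})N_t$, obtaining the proxy $\tfrac14+\tfrac{2k-1}{4\ln 3}+\sigma^2$ and the stated bound.
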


\subsection{Noisy Z-channel model}
In this asymmetric noise setting, we provide achievability guarantees and information-theoretic limits.
\begin{theorem}[Noisy Z-channel QGT]\label{thm:ZchannelQGT}
Let $n,k,A,x^*,y$ be as defined in Section~\ref{sec:preliminaries}. In the noisy Z-channel model with false-negative probability $p$, we have: 
\begin{itemize}
  \item For the LSE decoder
    \[
    \widehat{x} = \arg\min_x \|y - (1-p)A x\|_2^2,
    \] 
    the number of tests required to recover $x^*$ with vanishing error probability satisfies
  \[
  m = O\!\left(\frac{k\log(n/k)}{\log(1 + 2C_p)}\right)
  \]
  tests, where $C_p$ depends explicitly on the channel parameter $p$.
  $C_p=\frac{(1-p)^2\ln\frac{p}{1-p}}{2p-1}$
  \item If $\E \!\left[\frac{1}{\max\{A_ix, 1\}} \right]\leq \frac{c}{k}$, for some absolute constant $c$, then any decoder with any test design must satisfy
  \[
  m = \Omega\!\left(\frac{k\log(n/k)}{\log(\frac{1-k/n+kp/n}{p})}\right),
  \]
  The assumption is satisfied if every entry of test matrix $A$ is i.i.d $\Ber(1/2)$.
  \item For the linear decoder,
  the number of tests required to recover $x^*$ with vanishing error probability satisfies
    \[
    m\ge \Bigl(
    \frac{16p}{(1-p)^2 \ln\frac{1+p}{1-p}}
    +
    \frac{8(2k-1)(1+p)}{(1-p)^2 \ln\frac{3+p}{1-p}}\Bigr) \log\bigl(k(n-k)\bigr).
    \]

\end{itemize}
\end{theorem}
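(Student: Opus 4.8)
The plan is to reduce exact recovery of the $\mathrm{TopK}$ rule to a union bound over pairwise comparisons. Since $\widehat{D}_{\mathrm{lin}}=\mathrm{TopK}(S,k)$ returns $D$ exactly iff $\min_{j\in D}S_j>\max_{\ell\notin D}S_\ell$, it suffices to show that for every ordered pair $(j,\ell)$ with $j\in D$ and $\ell\notin D$ one has $\Pr[S_j\le S_\ell]=o\!\bigl(1/(k(n-k))\bigr)$, after which a union bound over the $k(n-k)$ pairs yields vanishing error. I will analyse the \emph{centred} correlation score $\sum_{i}a_{i,j}(y_i-\mu)$, with $\mu=\E[y_i]=\tfrac{k(1-p)}{2}$; fixing a pair this amounts to studying
\[
\Delta := \sum_{i=1}^m (a_{i,j}-a_{i,\ell})\,(y_i-\mu),
\]
a sum of i.i.d.\ increments (the rows of $A$ are independent). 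Writing $y_i=\sum_{j'\in D}a_{i,j'}z_{i,j'}$ and using that $a_{i,\ell}$ is independent of $y_i$ because $\ell\notin D$, a direct computation gives the strictly positive gap $\E[\Delta]=\tfrac{1-p}{4}\,m$, so the items are separated in mean and the whole task is to control the lower tail of $\Delta$.

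The key structural step is to split each increment into the target item $j$ and the interference from the other $k-1$ defectives. With $w_i:=\sum_{j'\in D\setminus\{j\}}a_{i,j'}z_{i,j'}$ and $\E w_i=\tfrac{(k-1)(1-p)}{2}$, I write $\Delta=B+W$, where $B=\sum_i (a_{i,j}-a_{i,\ell})\bigl(a_{i,j}z_{i,j}-\tfrac{1-p}{2}\bigr)$ is a $\Theta(1)$-scale signal carrying the entire mean, and $W=\sum_i (a_{i,j}-a_{i,\ell})(w_i-\E w_i)$ is the zero-mean interference. Centring is essential here: it replaces the raw background $w_i=\Theta(k)$ by its fluctuation $w_i-\E w_i=\Theta(\sqrt k)$, so that $\mathrm{Var}(W)=\Theta(mk)$ rather than $\Theta(mk^2)$, which is precisely what keeps the test count linear, and not quadratic, in $k$. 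Because $B$ and $W$ share $a_{i,j},a_{i,\ell}$, I do not factor their joint MGF; instead I use the deterministic inclusion $\{\Delta\le 0\}\subseteq\{B\le\tfrac12\E B\}\cup\{W\le -\tfrac12\E B\}$ and bound the two tails separately.

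For the signal I apply a Chernoff bound to $\Pr[B\le\tfrac12\E B]$; the tail is governed by the $\Ber(1-p)$ factor $p+(1-p)e^{-t}$, and taking the exponential tilt $e^{t}=\tfrac{1+p}{1-p}$ is what produces the $\ln\tfrac{1+p}{1-p}$ term in the exponent, so that demanding the bound beat $1/(k(n-k))$ yields the first summand $\tfrac{16p}{(1-p)^2\ln\frac{1+p}{1-p}}\log(k(n-k))$. For the interference I bound the tilted MGF
\[
\E\bigl[e^{-tW}\bigr]=\prod_{i=1}^m\Bigl(\tfrac12+\tfrac14\,e^{\,t\,\E w_i}\phi(-t)+\tfrac14\,e^{-t\,\E w_i}\phi(t)\Bigr),\qquad \phi(t)=\Bigl(\tfrac{1+p}{2}+\tfrac{1-p}{2}e^{t}\Bigr)^{k-1},
\]
choose the tilt $e^{t}=\tfrac{3+p}{1-p}$ (the source of the $\ln\tfrac{3+p}{1-p}$ factor), and linearise the $(k-1)$-st powers via $1+x\le e^{x}$ so that the exponent remains linear in $k$; this gives the second summand $\tfrac{8(2k-1)(1+p)}{(1-p)^2\ln\frac{3+p}{1-p}}\log(k(n-k))$. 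Adding the two test-count requirements produces the stated bound, and as a check, letting $p\to 0$ sends the two summands to $8$ and $\tfrac{8(2k-1)}{\ln 3}$, recovering the noiseless bound of Theorem~\ref{thm:noiselessQGT}.

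I expect the interference tail to be the main obstacle. Two points require care. First, one must keep the per-test scale of $W$ at $\Theta(\sqrt k)$ by centring the outcomes, since the naive increment $(a_{i,j}-a_{i,\ell})w_i$ has size $\Theta(k)$ and would force $m=\Theta\!\bigl(k^2\log(k(n-k))\bigr)$. Second, one must select the tilt $e^{t}=\tfrac{3+p}{1-p}$ so that $\tfrac{1+p}{2}+\tfrac{1-p}{2}e^{t}$ stays an $O(1)$ quantity while the exponent is as large as possible, and then carry the explicit constants (the $16$, the $8$, and the $2k-1$) faithfully through the Chernoff optimisation. Confirming that the symmetric split at $\tfrac12\E B$ balances the signal and interference tails correctly, and that the two tilts may be chosen independently, is the remaining bookkeeping.
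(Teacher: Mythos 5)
Your proposal addresses only the third bullet of the theorem. The LSE achievability bound $m=O\bigl(k\log(n/k)/\log(1+2C_p)\bigr)$ and the Fano-type converse are not touched at all; in the paper these occupy most of the argument (a union bound over Hamming shells of radius $2l$ combined with Lemma~\ref{lemma_exp} and a Gaussian-integral bound for the LSE, and an entropy--variance argument using the hypergeometric law of $A_ix$ together with the assumption $\E\bigl[1/\max\{A_ix,1\}\bigr]\le c/k$ for the converse). As written, the proposal is a plan for at most one third of the statement.

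For the linear-decoder bullet, your route genuinely diverges from the paper's, and the divergence matters. The paper works with the uncentred score $S_j=\sum_t a_{t,j}y_t$ directly: it expands the increment $(a_{t,i}-a_{t,j})y_t$ into the $2k$ Bernoulli products $a_{t,i}z_{t,i}$, $a_{t,i}a_{t,s}z_{t,s}$, $a_{t,j}a_{t,s}z_{t,s}$, assigns each the Kearns--Saul proxy $\sigma^2_{\mathrm{opt}}\bigl(\tfrac{1-p}{2}\bigr)$ or $\sigma^2_{\mathrm{opt}}\bigl(\tfrac{1-p}{4}\bigr)$, and sums; that is the sole source of the constants $\ln\tfrac{1+p}{1-p}$, $\ln\tfrac{3+p}{1-p}$ and the factor $2k-1$. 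You instead centre the observations and study $\Delta=\sum_i(a_{i,j}-a_{i,\ell})(y_i-\mu)$. But $\Delta=(S_j-S_\ell)-\mu\bigl(\sum_i a_{i,j}-\sum_i a_{i,\ell}\bigr)$, and the discarded column-weight term has standard deviation $\mu\sqrt{m/2}=\Theta(k\sqrt m)$, which swamps the mean gap $\tfrac{1-p}{4}m$ unless $m=\Omega\bigl(k^2\log(k(n-k))\bigr)$. Hence $\{\Delta\le0\}$ and $\{S_j\le S_\ell\}$ are genuinely different events, and your tail bound does not transfer to the decoder $\mathrm{TopK}(A^\top y,k)$ named in the theorem: you are implicitly analysing a modified (recentred) decoder and must say so. (Your variance accounting is in fact the honest one --- the uncentred increment $(a_{i,j}-a_{i,\ell})y_i$ does have variance $\Theta(k^2)$, whereas the paper's summation of proxies treats the $2k$ products sharing the factors $a_{t,i},a_{t,j}$ as independent --- but identifying that tension is not the same as proving the stated claim for the stated decoder.) Separately, the split $\{B\le\tfrac12\E B\}\cup\{W\le-\tfrac12\E B\}$ with independently optimised tilts loses a factor of four in each exponent relative to a single subgaussian bound on the sum, so it will not reproduce the exact constants $\tfrac{16p}{(1-p)^2\ln\frac{1+p}{1-p}}$ and $\tfrac{8(2k-1)(1+p)}{(1-p)^2\ln\frac{3+p}{1-p}}$ without further tightening; the $p\to0$ consistency check you run confirms only the asymptotic form, not that your two-event bookkeeping lands on these coefficients.
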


\section{Proofs of Theorem \ref{thm:noiselessQGT}, \ref{thm:gaussianQGT} and \ref{thm:ZchannelQGT}}\label{sec:proofs}
\subsection{Proof of Theorem \ref{thm:noiselessQGT}}
\begin{proof}
Write $D := \supp(x^*)$, $|D|=k$, and for each $i\in[n]$ define the score
\[
S_i := (A^\top y)_i
= \sum_{t=1}^m a_{t,i}\,y_t,
\qquad
y_t = \sum_{s\in D} a_{t,s}.
\]
For $i\in D$ and a fixed row $t$,
\[
a_{t,i} y_t
= a_{t,i}\sum_{s\in D} a_{t,s}
=
\begin{cases}
0, & a_{t,i}=0,\\[3pt]
1 + \sum_{s\in D\setminus\{i\}} a_{t,s}, & a_{t,i}=1,
\end{cases}
\]
where the $a_{t,s}$ for $s\in D\setminus\{i\}$ are i.i.d.\ $\Ber(1/2)$ and
independent of $a_{t,i}$. Hence
\[
\mathbb{E}[a_{t,i}y_t]
= \frac12\Bigl(1 + \frac{k-1}{2}\Bigr)
= \frac{k+1}{4},
\qquad
\mathbb{E}[S_i] = m\,\frac{k+1}{4},
\]
For $j\notin D$, $a_{t,j}$ is independent of $y_t$, and
$\mathbb{E}[y_t] = k/2$, so
\[
\mathbb{E}[a_{t,j}y_t]
= \mathbb{E}[a_{t,j}]\,\mathbb{E}[y_t]
= \frac12\cdot\frac{k}{2}
= \frac{k}{4},
\qquad
\mathbb{E}[S_j] = m\,\frac{k}{4},
\]
Thus, for every $i\in D$ and $j\notin D$,
\[
\mathbb{E}[S_i - S_j] = \frac{m}{4}.
\]
\\\\
Fix $i\in D$ and $j\notin D$ and write
\[
S_i - S_j = \sum_{t=1}^m z_t,
\qquad
z_t := a_{t,i}\sum_{s\in D} a_{t,s}
      - a_{t,j}\sum_{s\in D} a_{t,s}.
\]
Each summand admits the decomposition
\[
z_t 
= a_{t,i}^2 
  + \sum_{s\in D\setminus\{i\}} a_{t,i}a_{t,s}
  - \sum_{s\in D} a_{t,j}a_{t,s},
\]
where $a_{t,i}^2\sim\Ber(1/2)$ and every product $a_{t,i}a_{t,s}$ or
$a_{t,j}a_{t,s}$ is $\Ber(1/4)$.
Using the optimal subgaussian variance proxy for centered Bernoulli
variables \text{\cite{kearns_large_2013}},
\[
\sigma^2_{\mathrm{opt}}(p)
=\frac{1-2p}{2\ln\frac{1-p}{p}}
\]
and summing over the independent Bernoulli components of $z_t$, we obtain the variance proxy
\[
\|z_t - \mathbb{E}z_t\|^2_\text{vp}
\;\le\;
\frac14 + \frac{2k-1}{4\ln 3},
\]
uniformly in $t$.
\\\\
Denote $\mu := \mathbb{E}[S_i - S_j] = m/4$.
The standard subgaussian tail bound gives
\[
\Pr(S_i < S_j)
= \Pr(S_i - S_j \le 0)
= \Pr\Bigl(\sum_{t=1}^m (z_t - \mathbb{E}z_t) \le -\mu\Bigr)
\le 
\exp\Bigl(
  - \frac{\mu^2}{2 m \|z_t - \mathbb{E}z_t\|^2_\text{vp}}
\Bigr)
\le
\exp\Bigl(
  - \frac{m}{32 (\frac14 + \frac{2k-1}{4\ln3})}
\Bigr).
\]
An error occurs if $S_i \le S_j$ for some $i\in D$, $j\notin D$.  
By the union bound,
\[
\Pr(\text{Error})
\;\le\;
\sum_{i\in D}\sum_{j\notin D} \Pr(S_i < S_j)
\;\le\;
k(n-k)
\exp\Bigl(
  - \frac{m}{32 (\frac14 + \frac{2k-1}{4\ln3})}
\Bigr)
\]
Therefore, to recover $x^*$ with vanishing error probability, we need 
\[
m\ge \bigl(\frac{16}{\ln3}k+8-\frac{8}{\ln3}\bigr)\log\bigl(k(n-k)\bigr).
\]
\end{proof}

\subsection{Proof of Theorem \ref{thm:gaussianQGT}}
\subsubsection{LSE achievability}
\begin{proof}
Minimizing over $x$ is equivalent to maximizing
\[
f(x) := \sum_{i=1}^m [2 y_i A_i x - (A_i x)^2],
\]
so that the decoder is $\widehat{x} = \arg\max_x f(x)$.  
\\\\
For any $x \ne x^*$, define the difference
\[
f(x) - f(x^*) = \sum_{i=1}^m \Delta_i,
\quad
\text{where } \Delta_i = 2(A_i x - A_i x^*)(y_i - A_i x^*) - (A_i x - A_i x^*)^2.
\]
Substituting $y_i = A_i x^* + N_i$ gives
\[
\Delta_i = 2(A_i x - A_i x^*)N_i - (A_i x - A_i x^*)^2.
\]
Let $b := A x - A x^*$.  Then each $\Delta_i = 2 b_i N_i - b_i^2$, and the error event is
\[
\Pr(\text{error})
= \Pr\!\left( \exists x \ne x^* : f(x) \ge f(x^*) \right)
= \sum_{l=1}^k \sum_{\|x-x^*\|_H = 2l} \Pr(f(x)\ge f(x^*)).
\]
Conditioning on the values of $b' = (b_1',\ldots,b_m')$, we have
\begin{align*}
\Pr(f(x)\ge f(x^*))
&= \Pr\!\left(\sum_{i=1}^m 2b_i N_i \ge \sum_{i=1}^m b_i^2 \right)\\
&= \sum_{b'} \Pr\!\left(\sum_{i=1}^m 2b_i' N_i \ge \sum_{i=1}^m b_i'^2\right)
    \Pr(b' = A(x - x^*)).
\end{align*}
Since each $b_i N_i$ is a mean-zero sub-Gaussian variable with parameter $\|b_i N_i\|_{\psi_2}^2 = \frac{8}{3} b_i^2 \sigma^2$, and the terms are independent, by the standard Hoeffding bound (Theorem~2.6.2 in~\cite{vershynin2009high}), for some absolute constant $c>0$ we have
\[
\Pr\!\left(\sum_{i=1}^m 2b_i N_i \ge \sum_{i=1}^m b_i^2 \right)
\le 2\exp\!\left(
   -c \frac{(\sum_i b_i^2)^2}{\sum_i b_i^2\sigma^2}
\right)
= 2\exp\!\left(-c \frac{\sum_i b_i^2}{\sigma^2}\right).
\]
\\\\
Each test row $A_i$ is independent and has i.i.d.\ entries $\mathrm{Ber}(1/2)$.  
For fixed $x$ and $x^*$ differing in $2l$ positions, the random variable $b_i = A_i(x-x^*)$ is the difference of two independent $\mathrm{Bin}(l,1/2)$ variables.  
By Lemma \ref{lemma_exp},
\[
\Pr(b_i = t) = 4^{-l} {2l \choose l + |t|} \le \frac{e}{\pi\sqrt{2l}}\exp\!\left(-\frac{t^2}{2l}\right).
\]
As the rows $A_i$ are independent,
\[
\Pr(b' = A(x-x^*)) = \prod_{i=1}^m \Pr(b_i'=b_i)
\le \left(\frac{e}{\pi\sqrt{2l}}\right)^m
    \exp\!\left(-\frac{\sum_i b_i'^2}{2l}\right).
\]
Combining the two bounds,
\begin{align*}
\Pr(f(x)\ge f(x^*))
&\le \sum_{b'} 2\exp\!\left(-c\frac{\sum_i b_i'^2}{\sigma^2}\right)
        \left(\frac{e}{\pi\sqrt{2l}}\right)^m
        \exp\!\left(-\frac{\sum_i b_i'^2}{2l}\right)\\
&= 2\left(\frac{e}{\pi\sqrt{2l}}\right)^m
    \sum_{b'} \exp\!\left(-(\frac{c}{\sigma^2}+\frac{1}{2l})\sum_i b_i'^2\right).
\end{align*}
Since each coordinate $b_i'\in[-l,l]$ and they are independent,
\[
\sum_{b'} \exp\!\left(-(\frac{c}{\sigma^2}+\frac{1}{2l})\sum_i b_i'^2\right)
= \prod_{i=1}^m \sum_{t=-l}^l \exp\!\left(-(\frac{c}{\sigma^2}+\frac{1}{2l})t^2\right).
\]
Bounding the discrete sum by an integral,
\[
\sum_{t=-l}^l e^{-\alpha t^2}
\le 2\int_{0}^\infty e^{-\alpha t^2}dt
= \frac{\sqrt{\pi}}{2\sqrt{\alpha}},
\quad \alpha = \frac{c}{\sigma^2}+\frac{1}{2l}.
\]
Hence,
\[
\Pr(f(x)\ge f(x^*))
\le \left(\frac{e}{\pi\sqrt{2l}}\right)^m
     \left(\frac{\sqrt{\pi}}{2\sqrt{\frac{c}{\sigma^2}+\frac{1}{2l}}}\right)^m
= \left(\frac{c'}{\sqrt{\frac{l}{\sigma^2}+1}}\right)^m,
\]
for some absolute constant $c'>0$.
For a given Hamming distance $2l$, the number of possible $x$ differing from $x^*$ in those coordinates is at most ${k \choose l}{n-k \choose l} \le {k \choose l}{n \choose l}$.  
Applying Lemma \ref{lemma_N_choose_M} gives
\[
{k \choose l}{n \choose l} \le
\left(\frac{e^2kn}{l^2}\right)^l.
\]
Thus,
\begin{align*}
\Pr(\text{error})
&\le k \max_{1\le l\le k}
     {k \choose l}{n \choose l}
     \left(\frac{c'}{\sqrt{\frac{l}{\sigma^2}+1}}\right)^m\\
&\le k \max_{1\le l\le k}
     \left(\frac{e^2kn}{l^2}\right)^l
     \left(\frac{c'}{\sqrt{\frac{l}{\sigma^2}+1}}\right)^m.
\end{align*}
\\\\
To make the total probability vanish as $n\to\infty$, it suffices that the exponent of the largest term be negative, yielding
\[
m \ge \max_{1\le l\le k}
     \frac{l\log(\frac{kn}{l^2})}{\log(\frac{l}{\sigma^2}+1)}.
\]
Let
\[
T\ :=\ \frac{k\log(n/k)}{\log\!\big(1+\tfrac{k}{\sigma^2}\big)}
\ =\ \frac{k(1-\theta)\log n}{\log\!\big(1+\tfrac{k}{\sigma^2}\big)}.
\]
Now we aim to show that for any integer $l\in[1,k]$,
\[
F(l):=\frac{l\log(\frac{kn}{l^2})}{\log(\frac{l}{\sigma^2}+1)} \le 
(1+o(1))\ T.
\]
\smallskip
\emph{Case 1: $1\le l\le k^{1-\varepsilon}$.}
Using $k=n^\theta$ and $l\le k^{1-\varepsilon}$, we have
\[
\log\!\Big(\frac{kn}{l^2}\Big)\ \le\ \log(kn)= (1+\theta)\log n,
\]
and $\log(1+\tfrac{l}{\sigma^2})\ge \log(1+\tfrac{1}{\sigma^2})$. Hence
\[
F(l)\ \le\ \frac{k^{1-\varepsilon}(1+\theta)\log n}{\log\!\big(1+\tfrac{1}{\sigma^2}\big)}.
\]
Then
\[
\frac{F(l)}{T}
\ \le\
\frac{1+\theta}{1-\theta}\cdot \frac{1}{k^{\varepsilon}}\cdot
\frac{\log\!\big(1+\tfrac{k}{\sigma^2}\big)}{\log\!\big(1+\tfrac{1}{\sigma^2}\big)}.
\]
For any fixed $\varepsilon>0$, the factor $k^{-\varepsilon}$ dominates the at-most polylogarithmic growth of $\log(1+\tfrac{k}{\sigma^2})$, so $\frac{F(l)}{T}\to 0$. Therefore, for all sufficiently large $n$,
\[
\max_{1\le l\le k^{1-\varepsilon}} F(l)\ \le\ o(1)\cdot T
\]
\smallskip
\emph{Case 2: $k^{1-\varepsilon}< l\le k$.}
Set $g(l):=l\log\!\big(\tfrac{kn}{l^2}\big)$. Then
\[
g'(l)\ =\ \log(kn)-2\log l-2,\qquad
g''(l)\ =\ -\frac{2}{l}<0,
\]
so $g$ is concave. When $n/k>e^2$, we have $g'(k)=\log(n/k)-2>0$, and concavity implies $g$ attains its maximum over $[1,k]$ at the boundary $l=k$. Hence, for all sufficiently large $n$,
\[
l\log\!\Big(\frac{kn}{l^2}\Big)\ \le\ k\log\!\Big(\frac{n}{k}\Big).
\]
For the denominator, we have
\[
\log\!\Big(1+\frac{l}{\sigma^2}\Big)\ \ge\ \log\!\Big(1+\frac{k^{1-\varepsilon}}{\sigma^2}\Big)
\ =\ (1-\varepsilon)\log\!\Big(1+\frac{k}{\sigma^2}\Big)\ +\ O(1),
\]
Therefore,
\[
F(l)\ \le\ \frac{k\log(n/k)}{(1-\varepsilon)\log\!\big(1+\tfrac{k}{\sigma^2}\big)+O(1)}
\ =\ (1+o(1))\,\frac{k\log(n/k)}{\log\!\big(1+\tfrac{k}{\sigma^2}\big)}.
\]
\end{proof}
\subsubsection{Converse Bound}
\begin{proof}
Let $X$ denote the random support (uniform on $\binom{n}{k}$), $Y=(y_1,\dots,y_m)$ the observations, and condition on any fixed matrix $A$.  
By Fano’s inequality,
\[
P_e \;\ge\; 1 - \frac{I(X;Y\,|\,A)+\log 2}{\log \binom{n}{k}},
\]
so that achieving $P_e\to 0$ requires
\[
I(X;Y\,|\,A)\ \ge\ (1-o(1))\,\log\binom{n}{k}.
\]
For each row $A_i$, define the noiseless measurement
\[
S_i := \langle A_i, X\rangle,
\]
then $y_i = S_i + N_i$ with $N_i\sim\mathcal{N}(0,\sigma^2)$ independent of $S_i$.  
Because the channel is memoryless conditional on $A$, the chain rule gives
\[
I(X;Y\,|\,A)
= \sum_{i=1}^m I(X;y_i\,|\,A,y_{<i})
\le \sum_{i=1}^m I(S_i;y_i\,|\,A_i),
\]
For each test, we have
\[
I(S_i;y_i\,|\,A_i)
= h(y_i\,|\,A_i) - h(N_i).
\]
\[\mathrm{Var}(y_i\,|\,A_i) = \mathrm{Var}(S_i\,|\,A_i) + \sigma^2\]
For any random variable $Z$ with variance $v$, the differential entropy satisfies
\[
h(Z)\ \le\ \tfrac{1}{2}\log(2\pi e v),
\]
with equality if and only if $Z$ is Gaussian.  
Hence, we obtain
\[
h(y_i\,|\,A_i) \;\le\; \tfrac{1}{2}\log \bigl(2\pi e(\mathrm{Var}(S_i\,|\,A_i)+\sigma^2)\bigr),
\qquad
h(N_i) = \tfrac{1}{2}\log(2\pi e\sigma^2).
\]
Therefore,
\[
I(S_i;y_i\,|\,A_i)
\;\le\;
\tfrac{1}{2}\log\Bigl(1+\frac{\mathrm{Var}(S_i\,|\,A_i)}{\sigma^2}\Bigr).
\]
Given $A_i$ and the uniform random support $X$, $S_i = \langle A_i,X\rangle$ follows a hypergeometric distribution with parameters $(n,k,w_i)$, where $w_i=\|A_i\|_0$. Its variance satisfies
\[
\mathrm{Var}(S_i\,|\,A_i)
= k\frac{w_i}{n}\Bigl(1-\frac{w_i}{n}\Bigr)\frac{n-k}{n-1}
\le \frac{k}{4},
\]
the upper bound being attained when $w_i=n/2$.  
\\\\
Consequently,
\[
I(S_i;y_i\,|\,A_i)
\le \tfrac{1}{2}\log\!\Bigl(1+\tfrac{k}{4\sigma^2}\Bigr).
\]
Summing over all tests gives
\[
I(X;Y\,|\,A)
\le m\cdot \tfrac{1}{2}\log\!\Bigl(1+\tfrac{k}{4\sigma^2}\Bigr).
\]
Combining this with the Fano requirement $I(X;Y\,|\,A)\ge (1-o(1))\log\binom{n}{k}$ yields
\[
m \;\ge\; (1-o(1))\,\frac{\log\binom{n}{k}}{\tfrac{1}{2}\log(1+\tfrac{k}{4\sigma^2})}.
\]
Finally, using $\log\binom{n}{k}=k\log(n/k)+O(k)$ gives the desired bound.
\end{proof}
\subsubsection{Linear estimator achievability}
\begin{proof}
This proof is essentially the same as the proof of Theorem~\ref{thm:noiselessQGT}, for completeness, we present the proof in the following: 

Write $D := \supp(x^*)$, $|D|=k$, and for each $i\in[n]$ define the score
\[
S_i := (A^\top y)_i
= \sum_{t=1}^m a_{t,i}\,y_t,
\qquad
y_t = \sum_{s\in D} a_{t,s} + N_t.
\]
For $i\in D$,
\[
\mathbb{E}[a_{t,i}y_t]
= \frac12\Bigl(1 + \frac{k-1}{2} + 0\Bigr)
= \frac{k+1}{4},
\qquad
\mathbb{E}[S_i] = m\,\frac{k+1}{4}.
\]
For $j\notin D$,
\[
\mathbb{E}[a_{t,j}y_t]
= \mathbb{E}[a_{t,j}]\,\mathbb{E}\Bigl[\sum_{s\in D} a_{t,s} + w_t\Bigr]
= \frac12\cdot\frac{k}{2}
= \frac{k}{4},
\qquad
\mathbb{E}[S_j] = m\,\frac{k}{4}.
\]
Hence, for every $i\in D$ and $j\notin D$,
\[
\mathbb{E}[S_i - S_j] = \frac{m}{4}.
\]
Fix $i\in D$ and $j\notin D$ and write
\[
S_i - S_j = \sum_{t=1}^m z_t,
\qquad
z_t := a_{t,i}y_t - a_{t,j}y_t.
\]
Using $y_t = \sum_{s\in D} a_{t,s} + N_t$, we decompose
\[
z_t 
= \underbrace{\Bigl(a_{t,i}\sum_{s\in D} a_{t,s}
  - a_{t,j}\sum_{s\in D} a_{t,s}\Bigr)}_{\text{Noiseless part}}
  \;+\;
  \underbrace{(a_{t,i}-a_{t,j})N_t}_{\text{Gaussian noise part}}.
\]
By independence of the discrete and Gaussian components and additivity of variance proxies for independent sums, we have the uniform bound
\[
\|z_t - \mathbb{E}z_t\|^2_\text{vp}
\;\le\;
\frac14 + \frac{2k-1}{4\ln 3} + \sigma^2.
\]
Let $\mu := \mathbb{E}[S_i - S_j] = m/4$.  
Since the $z_t$ are independent and subgaussian with the above proxy, the standard subgaussian tail bound (Lemma 1.3 in \cite{rigollet_chapter_nodate}) gives
\[
\Pr(S_i < S_j)
= \Pr\Bigl(\sum_{t=1}^m (z_t - \mathbb{E}z_t) \le -\mu\Bigr)
\le 
\exp\Bigl(
  - \frac{\mu^2}{2 \sum_{t=1}^m \|z_t - \mathbb{E}z_t\|^2_\text{vp}}
\Bigr)
\le
\exp\Bigl(
  - \frac{m}{32(\frac14 + \frac{2k-1}{4\ln 3} + \sigma^2)}
\Bigr),
\]
An error occurs if $S_i \le S_j$ for some $i\in D$, $j\notin D$.  
By the union bound,
\[
\Pr(\text{Error})
\;\le\;
\sum_{i\in D}\sum_{j\notin D} \Pr(S_i < S_j)
\;\le\;
k(n-k)
\exp\Bigl(
  - \frac{m}{32(\frac14 + \frac{2k-1}{4\ln 3} + \sigma^2)}\Bigr).
\]
Therefore, to recover $x^*$ with vanishing error probability, we need 
\[
m\ge \bigl(\frac{16}{\ln3}k+32\sigma^2+8-\frac{8}{\ln3}\bigr)\log\bigl(k(n-k)\bigr).
\]
\end{proof}

\subsection{Proof of Theorem \ref{thm:ZchannelQGT}}
\subsubsection{LSE achievability}
\begin{proof}
Minimizing $\|y-(1-p)Ax\|_2^2$ is equivalent to maximizing
\[
f(x)=\sum_{i=1}^m \bigl[2y_i A_i x-(1-p)(A_i x)^2\bigr],
\]
so $\widehat{x}=\arg\max_x f(x)$.
For $x\neq x^*$ define
\[
f(x)-f(x^*)=\sum_{i=1}^m \Delta_i,
\qquad 
\Delta_i=2b_i\,(y_i-\E y_i)-(1-p)b_i^2,
\]
where $b_i:=A_i(x-x^*)$.  Since $y_i=\sum_j a_{i,j}x_j^*z_{i,j}$ with $z_{i,j}\sim\Ber(1-p)$,
each $y_i-\E y_i$ is a sum of at most $k$ centered $\Ber(1-p)$ variables.  
Thus its variance proxy satisfies
\[
\|y_i-\E y_i\|_{\mathrm{vp}}
\;\le\;
k\,\frac{2p-1}{2\ln\ \frac{p}{1-p}}.
\]
Applying Corollary~1.7 of \cite{rigollet_chapter_nodate} yields
\[
\Pr\!\left(\sum_{i=1}^m 2b_i(y_i-\E y_i)\ge (1-p)\sum_{i=1}^m b_i^2\right)
\le
\exp\!\left(-\,\frac{(1-p)^2\sum_i b_i^2}{k\cdot \frac{2p-1}{\ln\frac{p}{1-p}}}\right).
\]
Each test row $A_i$ is independent and has i.i.d.\ entries $\mathrm{Ber}(1/2)$.  
For fixed $x$ and $x^*$ differing in $2l$ positions, the random variable $b_i = A_i(x-x^*)$ is the difference of two independent $\mathrm{Bin}(l,1/2)$ variables.  
By Lemma \ref{lemma_exp},
\[
\Pr(b_i = t) = 4^{-l} {2l \choose l + |t|} \le \frac{e}{\pi\sqrt{2l}}\exp\!\left(-\frac{t^2}{2l}\right).
\]
As the rows $A_i$ are independent,
\[
\Pr(b = A(x-x^*)) = \prod_{i=1}^m \Pr(b_i)
\le \left(\frac{e}{\pi\sqrt{2l}}\right)^m
    \exp\!\left(-\frac{\sum_i b_i^2}{2l}\right).
\]
Combining the two bounds,
\begin{align*}
\Pr(f(x)\ge f(x^*))
&\le \sum_b \exp(-\frac{(1-p)^2\sum_{i=1}^m b_i^2}{k\cdot \frac{2p-1}{\ln\frac{p}{1-p}}})
        \left(\frac{e}{\pi\sqrt{2l}}\right)^m
        \exp\!\left(-\frac{\sum_i b_i^2}{2l}\right)\\
&= 2\left(\frac{e}{\pi\sqrt{2l}}\right)^m
    \sum_b \exp\!\left(-\alpha\sum_i b_i^2\right).
\end{align*}
where $\alpha = \frac{(1-p)^2\ln\frac{p}{1-p}}{2p-1}\cdot \frac{1}{k}+\frac{1}{2l}$.
\\
Since each coordinate $b_i\in[-l,l]$ and they are independent,
\[
\sum_b \exp\!\left(-\alpha\sum_i b_i^2\right)
= \prod_{i=1}^m \sum_{t=-l}^l \exp\!\left(-\alpha t^2\right).
\]
Bounding the discrete sum by an integral,
\[
\sum_{t=-l}^l e^{-\alpha t^2}
\le 2\int_{0}^\infty e^{-\alpha t^2}dt
= \frac{\sqrt{\pi}}{2\sqrt{\alpha}},
\]
Hence,
\[
\Pr(f(x)\ge f(x^*))
\le \left(\frac{e}{\pi\sqrt{2l}}\right)^m
     \left(\frac{\sqrt{\pi}}{2\sqrt{\alpha}}\right)^m
= \left(\frac{e/\sqrt{\pi}}{\sqrt{\frac{(1-p)^2\ln\frac{p}{1-p}}{2p-1}\frac{2l}{k}+1}}\right)^m.
\]
For a given Hamming distance $2l$, the number of possible $x$ differing from $x^*$ in those coordinates is at most ${k \choose l}{n-k \choose l} \le {k \choose l}{n \choose l}$.  
Applying Lemma \ref{lemma_N_choose_M} gives
\[
{k \choose l}{n \choose l} \le
\left(\frac{e^2kn}{l^2}\right)^l.
\]
Thus, let $C_p=\frac{(1-p)^2\ln\frac{p}{1-p}}{2p-1}$, 
\begin{align*}
\Pr(\text{error})
&\le k \max_{1\le l\le k}
     {k \choose l}{n \choose l}
     \left(\frac{e/\sqrt{\pi}}{\sqrt{C_p\frac{2l}{k}+1}}\right)^m\\
&\le k \max_{1\le l\le k}
     \left(\frac{e^2kn}{l^2}\right)^l
     \left(\frac{e/\sqrt{\pi}}{\sqrt{C_p\frac{2l}{k}+1}}\right)^m.
\end{align*}
\\\\
To make the total probability vanish as $n\to\infty$, it suffices that the exponent of the largest term be negative, yielding
\[
m \ge \max_{1\le l\le k}
     \frac{l\log(\frac{kn}{l^2})}{\log(C_p\frac{2l}{k}+1)} = \frac{k\log(\frac{n}{k})}{\log(2C_p+1)}
\]
\end{proof}
\subsubsection{Converse Bound}
\begin{proof}
Suppose $A$ is some arbitrary random binary matrix and $x$ is chosen uniformly at random from the set of all $k$ sparse binary vectors. For each test $i$, let $S_i := A_i x$ denote the (noiseless) number of defectives included in that test and $|A_i|$ denotes the number of 1s in $A_i$ where $A_i$ denotes the $i^{\text{th}}$ row of $A$.  
Since $A_i$ is independent of $x$, 
\[
I(Y_i,A_i;X)=I(Y_i;X\mid A_i)=H(Y_i\mid A_i)-H(Y_i\mid X,A_i)
\]
We will first argue that the distribution of $S_i|A\sim \text{hypergoemetric}(n, |A_i|,k)$, i.e., the distribution of $S_i$ conditioned on $A$ is hypergeometric with parameters $(n,|A_i|,k)$. This is because 
\[
\Pr(S_i = s|A=a) = \frac{\binom{|a_i|}{s}\binom{n-|a_i|}{k-s}}{\binom{n}{k}}.
\]
Further, we have $Y_i|(S_i,A)\sim \mathrm{Bin}(S_i,(1-p))$. 
Suppose, $A$ is fixed as $a$. Then, by law of total variance
\begin{align*}
\text{Var}[Y_i] &= \mathbb{E}[\text{Var}[Y_i|S_i]]+ \text{Var}[\mathbb{E}[Y_i|S_i]]\\
& = \mathbb{E}[S_i p(1-p)] + \text{Var}[S_i(1-p)]\\
& = p(1-p)\mathbb{E}[S_i] + (1-p)^2\text{Var}[S_i]\\
& = p(1-p)\mathbb{E}[S_i] + (1-p)^2\text{Var}[S_i]\\
& =p(1-p)k\frac{|a_i|}{n} + (1-p)^2k\frac{|a_i|}{n}\frac{(n-|a_i)}{n}\frac{(n-k)}{n-1}\\
& \leq (1-p)k\frac{|a_i|}{n}\left(p+(1-p)\frac{(n-k)}{n}\right)\\
& = (1-p)k\frac{|a_i|}{n}\left(1-(1-p)\frac{k}{n}\right)\\
& \leq (1-p)k\frac{n}{n}\left(1-(1-p)\frac{k}{n}\right)\\
& =(1-p)k\left(1-(1-p)\frac{k}{n}\right).
\end{align*}

A standard discrete entropy upper bound \cite{lmassey_entropy_1988} gives
\begin{align*}
H(Y_i|A_i)&\leq \frac12\E\left[\log\bigl(2\pi e (\text{Var}(Y_i|A_i)+\frac{1}{12})\bigr)\right]\\
&\leq \frac12\log\bigl(2\pi e (k(1-p)(1-\frac{k}{n}(1-p))+\frac{1}{12})\bigr)\\
&\stackrel{(a)}{\leq}\frac12\log\bigl(4\pi e (k(1-p)(1-\frac{k}{n}(1-p)))\bigr).
\end{align*} where $(a)$ holds for large $k$ and fixed $p$.
Next, by Theorem 2 in \cite{jacquet_entropy_1999}
\begin{align*}
-H(Y_i\mid X,A_i)
&=-H(Y_i\mid S_i)\\
&\le\E_{S_i}\!\left[\min\left\{-\frac12\log\!\bigl(2\pi e\,S_i p(1-p)\bigr),0\right\}\right]\\
&\le -\frac12\log\!\bigl(2\pi e p(1-p)\bigr) +\frac12\E_{S_i}\!\left[\min\left\{\log\!\left(\frac{1}{S_i} \right), 0\right\}\right]\\
&\stackrel{(a)}{=} -\frac12\log\!\bigl(2\pi e p(1-p)\bigr) +\frac12\E_{S_i}\!\left[\log\!\left(\frac{1}{\max\{S_i, 1\}} \right)\right]\\
&\leq -\frac12\log\!\bigl(2\pi e p(1-p)\bigr) +\frac12\log\!\left(\E_{S_i}\!\left[\frac{1}{\max\{S_i, 1\}} \right]\right)\\
&\stackrel{(b)}{\leq} -\frac12\log\!\bigl(2\pi e p(1-p)\bigr) +\frac12\log\!\left(\frac{C}{k}\right)\\
&= -\frac12\log\!\left(\frac{2\pi e}{C} p(1-p)k\right). 
\end{align*} where $(a)$ holds because $S_i$ takes non negative integer values. $(b)$ uses the assumption $\E_{S_i}\!\left[\frac{1}{\max\{S_i, 1\}} \right]\leq \frac{C}{k}$ for some constant $C$.
Combining the two bounds,
\begin{align*}
I(Y_i,A_i;X)
&\leq \frac12\log\left(4\pi e \left(k(1-p)\left(1-\frac{k}{n}(1-p)\right)\right)\right)-\frac12\log\!\left(\frac{2\pi e}{C} p(1-p)k\right)\\
& = \frac12\log\left(\frac{c'\left(1-\frac{k}{n}(1-p)\right)}{p}\right).
\end{align*}
Therefore, by Fano's inequality,
any decoder achieving vanishing error probability must satisfy
\[
m \;\ge\; \frac{k\log(n/k)}{\log\left(c'\left(\frac{1-k/n+kp/n}{p}\right)\right)}.
\]
for some constant $c'$.
Next, we will argue that the assumption $\E_{S_i}\!\left[\frac{1}{\max\{S_i, 1\}} \right]\leq \frac{C}{k}$ holds when $A$ has i.i.d  $\Ber(1/2)$ entries.
\begin{align*}
\E_{S_i}\!\left[\frac{1}{\max\{S_i, 1\}} \right]&\leq \Pr(S_i\leq k/4) + \frac{4}{k}\\
& \stackrel{(a)}{\leq} e^{-ck} + \frac{4}{k}\\
&\stackrel{(b)}{\leq} \frac{5}{k}.
\end{align*} where a holds because of Chernoff bound for some absolute constant $c$ and $(b)$ holds for large $k$.
\end{proof}
\subsubsection{Linear estimator achievability}
\begin{proof}
Write $D := \supp(x^*)$, $|D|=k$, and for each $i\in[n]$ define the score
\[
S_i := (A^\top y)_i
= \sum_{t=1}^m a_{t,i}\,y_t,
\qquad
y_t = \sum_{s\in D} a_{t,s} z_{t,s}
\]
For $i\in D$,
\[
\mathbb{E}[S_i] = (1-p)m\,\frac{k+1}{4}.
\]
For $j\notin D$,
\[
\mathbb{E}[S_j] = (1-p)m\,\frac{k}{4}.
\]
Hence, for every $i\in D$ and $j\notin D$,
\[
\mathbb{E}[S_i - S_j] = (1-p)\frac{m}{4}.
\]
Fix $i\in D$ and $j\notin D$ and write
\[
S_i - S_j = \sum_{t=1}^m \delta_t,
\qquad
\delta_t := a_{t,i}y_t - a_{t,j}y_t.
\]
Using $y_t = \sum_{s\in D} a_{t,s} z_{t,s}$, we decompose
\[
\delta_t 
= a_{t,i}^2z_{t,i} 
  + \sum_{s\in D\setminus\{i\}} a_{t,i}a_{t,s}z_{t,s}
  - \sum_{s\in D} a_{t,j}a_{t,s}z_{t,s},
\]
Therefore,
\[
\|\delta_t - \mathbb{E}\delta_t\|^2_\text{vp}
\;\le\;
\sigma_{\text{op}}^2(\Ber(\frac{1-p}{2})) + (2k-1) \sigma_{\text{op}}^2(\Ber(\frac{1-p}{4}))
=\frac{p}{2\ln(\frac{1+p}{1-p})} + (2k-1)\frac{1+p}{4\ln(\frac{3+p}{1-p})}.
\]
Let $\mu := \mathbb{E}[S_i - S_j] = (1-p)m/4$.  
Since the $\delta_t$ are independent and subgaussian with the above proxy, the standard subgaussian tail bound (Lemma 1.3 in \cite{rigollet_chapter_nodate}) gives
\[
\Pr(S_i < S_j)
= \Pr\Bigl(\sum_{t=1}^m (\delta_t - \mathbb{E}\delta_t) \le -\mu\Bigr)
\le 
\exp\Bigl(
  - \frac{\mu^2}{2 \sum_{t=1}^m \|\delta_t - \mathbb{E}\delta_t\|^2_\text{vp}}
\Bigr)
\le
\exp\Bigl(
  - \frac{(1-p)^2 m}{32(\frac{p}{2\ln(\frac{1+p}{1-p})} + (2k-1)\frac{1+p}{4\ln(\frac{3+p}{1-p})})}
\Bigr).
\]
An error occurs if $S_i \le S_j$ for some $i\in D$, $j\notin D$.  
By the union bound,
\[
\Pr(\text{Error})
\;\le\;
\sum_{i\in D}\sum_{j\notin D} \Pr(S_i < S_j)
\;\le\;
k(n-k)
\Pr(S_i < S_j).
\]
Therefore, to recover $x^*$ with vanishing error probability, we need 
\[
m\ge \Bigl(
\frac{16p}{(1-p)^2 \ln\frac{1+p}{1-p}}
+
\frac{8(2k-1)(1+p)}{(1-p)^2 \ln\frac{3+p}{1-p}}\Bigr) \log\bigl(k(n-k)\bigr).
\]
\end{proof}
\section{Future Directions}
\label{sec:future}
Several directions remain open for future work.
First, while our analysis focuses on a simple correlation-based linear estimator and a least squares benchmark, it would be interesting to design and analyze other computationally efficient algorithms that further narrow the gap to information-theoretic limits.
In particular, iterative schemes that leverage quantitative measurements may achieve improved noise robustness while retaining polynomial-time complexity.

Second, our results cover additive Gaussian noise and Z-channel noise, but many practical settings involve more complex or heterogeneous corruption mechanisms.
Extending the current framework to other noise models would provide a more comprehensive understanding of the robustness limits of noisy quantitative group testing.
\section*{Appendix}
\begin{lemma}\label{lemma_N_choose_M}
\[{N\choose M}\le \bigl(\frac{eN}{M}\bigr)^M\]
\end{lemma}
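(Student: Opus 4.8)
The plan is to prove this standard binomial bound by pairing a crude upper estimate on the coefficient itself with a matching lower bound on the factorial in the denominator. First I would expand $\binom{N}{M} = \frac{N(N-1)\cdots(N-M+1)}{M!}$ and bound the falling factorial in the numerator by $N^M$, since each of its $M$ factors is at most $N$. This already gives the intermediate bound $\binom{N}{M} \le \frac{N^M}{M!}$, so the whole problem reduces to controlling $M!$ from below by exactly the right quantity.

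The key step is then the inequality $M! \ge \frac{M^M}{e^M}$, equivalently $\frac{M^M}{M!} \le e^M$. I would obtain this directly from the power series $e^M = \sum_{j=0}^{\infty} \frac{M^j}{j!}$: since every term in this sum is nonnegative, the single term at index $j = M$, namely $\frac{M^M}{M!}$, cannot exceed the entire sum $e^M$. Combining the two steps yields $\binom{N}{M} \le \frac{N^M}{M!} \le \frac{N^M e^M}{M^M} = \bigl(\frac{eN}{M}\bigr)^M$, which is precisely the claim.

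There is no real obstacle in this argument; it is entirely routine. The only point worth flagging is the factorial lower bound, and I would emphasize that routing it through the exponential series (rather than through Stirling's formula) is the cleanest path and is exactly what produces the constant $e$ in the statement with no error terms to manage. I would present the proof in three short lines following the order above.
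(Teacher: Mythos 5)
Your proof is correct: the chain $\binom{N}{M}\le N^M/M!$ together with $M!\ge M^M/e^M$ (read off from the term $j=M$ in the series $e^M=\sum_{j\ge 0}M^j/j!$) is the standard, airtight argument for this bound. The paper itself does not prove the lemma but only cites Lemma~11 of Feige and Lellouche, so your self-contained elementary derivation is a perfectly acceptable (indeed, the canonical) substitute; the only cosmetic caveat is that the statement should be read with $M\ge 1$, since the right-hand side is not defined at $M=0$.
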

\begin{proof}
    See Lemma 11 in \cite{feige_quantitative_2020} (Appendix A) 
\end{proof}
\begin{lemma}\label{lemma_2N_choose_N}
\[{2N\choose N}\le \frac{e\cdot4^N}{\pi \sqrt{2N}}\]
\end{lemma}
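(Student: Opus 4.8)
The plan is to reduce the claim to the sharp Stirling estimate of the central binomial coefficient and then absorb the leftover slack into a single numerical inequality. First I would invoke Robbins' non-asymptotic form of Stirling's formula,
\[
\sqrt{2\pi n}\,\Bigl(\tfrac{n}{e}\Bigr)^{n} e^{\frac{1}{12n+1}}
\;<\;
n!
\;<\;
\sqrt{2\pi n}\,\Bigl(\tfrac{n}{e}\Bigr)^{n} e^{\frac{1}{12n}},
\]
valid for every $n\ge 1$. Applying the upper bound to $(2N)!$ and the lower bound to $(N!)^2$ in $\binom{2N}{N}=\frac{(2N)!}{(N!)^2}$ sets up an explicit cancellation of the $(\cdot/e)$ powers and the polynomial prefactors.

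Carrying out that algebra, the $(2N/e)^{2N}$ and $(N/e)^{2N}$ factors combine to $4^{N}$, the square-root prefactors combine as $\frac{\sqrt{4\pi N}}{2\pi N}=\frac{1}{\sqrt{\pi N}}$, and the residual exponential factor is $\exp\!\bigl(\tfrac{1}{24N}-\tfrac{2}{12N+1}\bigr)$, which is at most $1$ because $\tfrac{1}{24N}<\tfrac{2}{12N+1}$ for all $N\ge 1$ (equivalently $12N+1<48N$). This yields the sharp estimate
\[
\binom{2N}{N}\;\le\;\frac{4^{N}}{\sqrt{\pi N}}.
\]

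Finally I would weaken this to the stated form by checking $\frac{4^{N}}{\sqrt{\pi N}}\le\frac{e\cdot 4^{N}}{\pi\sqrt{2N}}$, which after dividing through by $4^{N}/\sqrt{N}$ collapses to the constant inequality $\sqrt{2\pi}\le e$; since $\sqrt{2\pi}\approx 2.5066<2.7183\approx e$, this holds with room to spare. I do not expect a genuine obstacle here: the only content is bookkeeping in the Stirling ratio, and the comfortable gap between $\sqrt{2\pi}$ and $e$ shows the stated constant $e/\pi$ is deliberately loose, so even a cruder Stirling bound—or the observation that $\binom{2N}{N}\,\pi\sqrt{2N}/(e\,4^{N})$ increases to its limit $\sqrt{2\pi}/e<1$—would suffice. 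The one point requiring care is verifying the sign of the exponent $\tfrac{1}{24N}-\tfrac{2}{12N+1}$ uniformly in $N$, so that the residual exponential factor can be dropped cleanly rather than tracked.
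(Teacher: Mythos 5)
Your argument is correct, and every numerical step checks out: the Robbins bounds give $\binom{2N}{N}\le \frac{4^N}{\sqrt{\pi N}}\,e^{1/(24N)-2/(12N+1)}$, the exponent is negative precisely because $12N+1<48N$ for $N\ge 1$, and the final reduction to $\sqrt{2\pi}\le e$ is right (the prefactor computation $\sqrt{4\pi N}/(2\pi N)=1/\sqrt{\pi N}$ is also correct). The comparison with the paper is somewhat vacuous here: the paper does not prove this lemma at all, it simply cites Lemma~12 in the appendix of Feige and Lellouche, so your write-up is strictly more self-contained than what appears in the text. The route you take (non-asymptotic Stirling with explicit error terms) is the standard one for such bounds and in fact yields the sharper estimate $\binom{2N}{N}\le 4^N/\sqrt{\pi N}$ as an intermediate step, which, as you observe, shows the stated constant $e/\pi$ is loose by a factor of $e/\sqrt{2\pi}\approx 1.084$. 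Your alternative remark---that $\binom{2N}{N}\sqrt{N}/4^N$ is increasing with limit $1/\sqrt{\pi}$, which one can verify directly from the ratio $\frac{2N+1}{2N+2}\sqrt{(N+1)/N}\ge 1$---would give an even more elementary proof avoiding Stirling entirely, and might be worth promoting from an aside to the main argument if self-containment is the goal.
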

\begin{proof}
    See Lemma 12 in \cite{feige_quantitative_2020} (Appendix A)
\end{proof}
\begin{lemma}\label{lemma_exp}
If $0\le b\le l$, then
    \[4^{-l}{2l \choose l+b}\le \frac{e}{\pi  \sqrt{2l}}\cdot \exp\!\left(-\frac{b^2}{2l}\right)\]
\end{lemma}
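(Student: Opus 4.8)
The plan is to factor out the central binomial coefficient and control the remaining ratio, reducing everything to the $b=0$ case already supplied by Lemma~\ref{lemma_2N_choose_N}. First I would write
\[
4^{-l}\binom{2l}{l+b}
= 4^{-l}\binom{2l}{l}\cdot \frac{\binom{2l}{l+b}}{\binom{2l}{l}},
\]
and bound the first factor directly by Lemma~\ref{lemma_2N_choose_N}, giving $4^{-l}\binom{2l}{l}\le \frac{e}{\pi\sqrt{2l}}$. This already matches the target exactly when $b=0$, so it remains only to show that the ratio contributes the Gaussian factor, i.e.\ that $\binom{2l}{l+b}/\binom{2l}{l}\le \exp(-b^2/(2l))$.

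For the ratio I would expand it as a finite product: cancelling the common $(2l)!$ and pairing the surviving factorials yields
\[
\frac{\binom{2l}{l+b}}{\binom{2l}{l}}
= \frac{l!\,l!}{(l+b)!\,(l-b)!}
= \prod_{j=1}^{b}\frac{l-j+1}{l+j}.
\]
Taking logarithms and applying the elementary inequality $\log(1+x)\le x$ (legitimate since $b\le l$ forces each factor into $(0,1)$) gives
\[
\log\frac{\binom{2l}{l+b}}{\binom{2l}{l}}
= \sum_{j=1}^{b}\log\!\Bigl(1-\frac{2j-1}{l+j}\Bigr)
\le -\sum_{j=1}^{b}\frac{2j-1}{l+j}.
\]

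To finish I would lower-bound this sum. Replacing each denominator $l+j$ by its largest value $l+b$ and using the identity $\sum_{j=1}^{b}(2j-1)=b^2$ produces
\[
\sum_{j=1}^{b}\frac{2j-1}{l+j}
\ \ge\ \frac{1}{l+b}\sum_{j=1}^{b}(2j-1)
\ =\ \frac{b^2}{l+b}
\ \ge\ \frac{b^2}{2l},
\]
the last step again invoking $b\le l$ so that $l+b\le 2l$. Combining the three displays gives the claim. All of the arithmetic here is routine; the only genuinely delicate choice is to collapse the denominator to $l+b$ \emph{before} summing, so that the sum-of-odd-integers identity yields a clean $b^2$, and then to spend the hypothesis $b\le l$ a second time to turn $l+b$ into $2l$. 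In other words, the assumption $b\le l$ is what calibrates the exponent to precisely $b^2/(2l)$, and keeping track of where it is needed is the main thing to get right.
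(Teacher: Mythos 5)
Your proof is correct and follows essentially the same route as the paper: factor out the central binomial coefficient (handled by Lemma~\ref{lemma_2N_choose_N}), write the ratio as $\prod_{j=1}^{b}\bigl(1-\tfrac{2j-1}{l+j}\bigr)$, apply $1-x\le e^{-x}$, and lower-bound the resulting sum by $b^2/(2l)$ using $l+j\le 2l$. Your intermediate stop at $b^2/(l+b)$ before relaxing to $b^2/(2l)$ is a cosmetic difference only.
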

\begin{proof}
We begin with
\[
\frac{{2l \choose l+b}}{{2l \choose l}}
= \prod_{i=1}^b \frac{l - i + 1}{l + i}
= \prod_{i=1}^b \left(1 - \frac{2i - 1}{l + i}\right).
\]
Using the inequality $1 - x \le e^{-x}$ for all real $x$, we obtain
\[
\frac{{2l \choose l+b}}{{2l \choose l}}
\le \exp\!\left(-\sum_{i=1}^b \frac{2i - 1}{l + i}\right).
\]
Since $l + i \le 2l$ for all $i \le b$, it follows that
\[
\sum_{i=1}^b \frac{2i - 1}{l + i}
\ge \frac{1}{2l}\sum_{i=1}^b (2i - 1)
= \frac{b^2}{2l}.
\]
By Lemma~\ref{lemma_2N_choose_N}, 
\[4^{-l}{2l \choose l+b}\le 4^{-l}{2l \choose l} \exp\!\left(-\frac{b^2}{2l}\right)
\le \frac{e}{\pi  \sqrt{2l}}\cdot \exp\!\left(-\frac{b^2}{2l}\right)\]
\end{proof}
\bibliographystyle{alpha}
\bibliography{references}

\end{document}